\documentclass[10pt,letterpaper]{article}

\usepackage[T1]{fontenc}
\usepackage[utf8]{inputenc}
\usepackage{lmodern}
\usepackage{microtype}
\usepackage[margin=0.75in]{geometry}
\usepackage{amsmath,amssymb,amsfonts,amsthm,mathtools,bm}
\usepackage{siunitx}
\usepackage{booktabs,array,graphicx,xcolor}
\usepackage[font=small,labelfont=bf]{caption}
\usepackage{stfloats}
\usepackage{tikz,pgfplots}
\pgfplotsset{compat=1.18}
\usepackage{enumitem}
\usepackage{authblk}
\usepackage[colorlinks=true,linkcolor=blue!60!black,citecolor=blue!60!black,urlcolor=blue!60!black]{hyperref}

\theoremstyle{plain}
\newtheorem{proposition}{Proposition}[section]
\newtheorem{theorem}[proposition]{Theorem}
\newtheorem{lemma}[proposition]{Lemma}

\theoremstyle{definition}

\newtheorem{assumption}[proposition]{Assumption}
\theoremstyle{remark}
\newtheorem{remark}[proposition]{Remark}

\newcommand{\dd}{\mathrm{d}}
\newcommand{\R}{\mathbb{R}}
\newcommand{\order}[1]{\mathcal{O}\!\left(#1\right)}
\newcommand{\abs}[1]{\left\lvert #1\right\rvert}

\DeclareMathOperator{\Ei}{Ei}
\newcommand{\Tm}{T_m}
\newcommand{\Ta}{T_a}
\newcommand{\Qstar}{Q^*}
\newcommand{\rhop}{\rho_p}
\newcommand{\sigSB}{\sigma_{\rm SB}}
\newcommand{\rcrit}{r_0^{\rm crit}}
\newcommand{\Abl}{\mathcal{A}}
\newcommand{\chirad}{\chi_{\rm rad}}
\newcommand{\Eh}{\mathcal{E}_\chi}
\newcommand{\Aheat}{A_{\rm h}}
\newcommand{\Arad}{A_{\rm rad}}

\title{A Threshold Model for Micrometeoroid Atmospheric Entry:\\
Filippov Dynamics, Survival Estimates, and Survivor-Only Inverse Limits}
\author[1]{Md Shahrier Islam Arham}
\author[1]{Prasun Panthi}
\author[1]{Min Heo}
\affil[1]{Wabash College, Crawfordsville, IN 47933}
\date{\today}

\begin{document}
\twocolumn[
\begin{@twocolumnfalse}
\maketitle
\begin{abstract}
Micrometeoroids enter Earth's atmosphere at hypervelocity speeds and experience rapid coupling between drag, heating, radiation, melting, ablation, and deceleration. This paper develops a reduced threshold model for the thermal survival boundary of spherical micrometeoroids. The model uses free molecular drag, an exponential atmosphere, radiative cooling from the full spherical surface, and a surplus-heat ablation rule at the melting temperature. The switching surface $T=\Tm$ is treated as a Filippov/complementarity surface. Sustained melting occurs when the local heating-to-radiation ratio exceeds unity. Under the additional Allen--Eggers assumptions of constant radius, constant entry angle, negligible gravity during the main heating interval, and constant transport coefficients, this threshold yields the classical approximate survival scaling $\rcrit\sim v_0^{-3}$. An exact radius-loss identity is obtained along the prescribed Allen--Eggers trajectory, and a perturbative stability estimate explains when this expression approximates the full reduced model. The inverse problem is formulated through a transfer matrix from pre-atmospheric entry bins to observed survivor bins. Entry bins with zero survival probability lie in the survivor-only null space and require external information for reconstruction. The framework gives a compact analytical description of threshold entry survival and identifies the information lost when only surviving particles are observed.
\end{abstract}
\vspace{0.5em}
\noindent\textbf{Keywords:} micrometeoroids, atmospheric entry, ablation, Filippov systems, cosmic dust, inverse problems
\vspace{1.2em}
\end{@twocolumnfalse}
]

\section{Introduction}
\label{sec:intro}

Earth accretes interplanetary dust across a broad range of particle sizes, entry speeds, radiant directions, and compositions. Estimates of the total input rate vary because spacecraft detectors, zodiacal cloud models, atmospheric metal layers, ice and snow collections, and recovered micrometeorites sample different parts of the population \cite{plane2012,carrillo2015,carrillo2016,carrillo2020,vanGinneken2024}. Recovered micrometeorites carry a strong atmospheric-entry filter. Particles that survive can be unmelted, partially melted, or fully melted, while particles that fully ablate are absent from collection data \cite{genge2008,genge2020,wozniakiewicz2024}.

Micrometeoroid heating has been studied for decades. Flynn analysed atmospheric entry heating and showed how entry angle, density, and thermal assumptions affect peak temperature estimates \cite{flynn1989}. Love and Brownlee computed numerical histories for particles from about $10~\mu{\rm m}$ to $1~{\rm mm}$ and speeds from $11.2$ to $72~{\rm km\,s^{-1}}$, including mass loss, cooling, gravity, curvature, and atmospheric structure \cite{love1991}. Bronshten and Ceplecha et al. provide the broader meteor-physics background \cite{bronshten1983,ceplecha1998}. Modern ablation models add composition-dependent physics. CAMOD and CABMOD include free molecular flow, sputtering before melting, evaporation from molten particles, diffusion-limited elemental loss, and impact ionisation \cite{vondrak2008,carrillo2015}. Briani et al. solve coupled motion, energy, and mass equations with collisions, radiation, evaporation, melting, sputtering, and the kinetic energy of ablated material \cite{briani2013}. Fragmentation also affects faint meteors and fragile aggregates \cite{campbellbrown2004}.

The present work focuses on the analytical structure of the entry threshold. A four-state model tracks altitude, downward speed, temperature, and radius. Melting is represented by a sharp threshold. Projected area controls drag and aerodynamic heating. Full spherical area controls radiative cooling. At the melting threshold, incoming heat first balances radiation, and the remaining heat drives ablation. This formulation leads naturally to a Filippov/complementarity description. The same threshold also provides a direct route to the familiar $v_0^{-3}$ survival trend under Allen--Eggers assumptions \cite{allen1958}.

The paper makes six contributions. First, it states a threshold entry model with free molecular drag, spherical radiative geometry, and surplus-heat ablation. Second, it identifies the sliding condition on the melting surface. Third, it proves the Allen--Eggers melting-threshold scaling inside the stated reduced problem and gives the finite-altitude correction. Fourth, it derives a radius-loss identity along the Allen--Eggers trajectory and states a stability estimate for its use in the full reduced model. Fifth, it gives a dimensionally consistent nondimensional form, local elasticities, and global sensitivity indices. Sixth, it states the survivor-only inverse problem as a precise matrix non-identifiability result.

\section{Reduced entry model}
\label{sec:model}

\subsection{State variables and geometry}

The state vector is
\begin{equation}
  y(t)=(z(t),v(t),T(t),r(t))^T,
\end{equation}
where $z$ is altitude, $v>0$ is downward speed along the trajectory, $T$ is particle temperature, and $r$ is particle radius. The flight-path angle $\gamma$ is measured from the horizontal and is fixed in the reduced model. The mass and the two relevant areas are
\begin{equation}
  m=\frac{4}{3}\pi\rhop r^3,
  \qquad
  \Aheat=\pi r^2,
  \qquad
  \Arad=\chirad\pi r^2.
\end{equation}
The factor $\chirad$ records the geometry of the radiating surface. For an isothermal sphere,
\begin{equation}
  \chirad=4.
\end{equation}
This value is used in the numerical figures.

The altitude equation is
\begin{equation}
  \dot z=-v\sin\gamma.
  \label{eq:zdot}
\end{equation}
The atmospheric density is modelled as
\begin{equation}
  \rho(z)=\rho_0 e^{-z/H}.
  \label{eq:atmosphere}
\end{equation}
This exponential form gives closed analytical expressions and captures the leading density increase during descent.

\subsection{Momentum balance}

The baseline drag law is free molecular,
\begin{equation}
  D=\Gamma\rho(z)\Aheat v^2,
  \label{eq:drag_force}
\end{equation}
with dimensionless drag coefficient $\Gamma$. Since $v$ is downward speed, gravity contributes positively to $\dot v$ for a descending particle. The speed equation is
\begin{equation}
  \dot v=g\sin\gamma-\Gamma\frac{\rho(z)\Aheat}{m}v^2.
  \label{eq:vdot}
\end{equation}
During the strongest heating interval, the drag term is usually the dominant term in Eq.~\eqref{eq:vdot}.

\subsection{Thermal balance before melting}

The absorbed kinetic energy flux is
\begin{equation}
  Q_H=\frac{1}{2}\Lambda\rho(z)\Aheat v^3,
  \label{eq:qheat}
\end{equation}
where $\Lambda$ is the heat-transfer efficiency. Radiative cooling is
\begin{equation}
  Q_R=\epsilon\sigSB \Arad\left(T^4-\Ta^4\right).
  \label{eq:qrad}
\end{equation}
For $T<\Tm$, the particle does not ablate and
\begin{equation}
  mC_p\dot T=Q_H-Q_R,
  \qquad
  \dot r=0.
  \label{eq:pre_melt_energy}
\end{equation}

\subsection{Surplus-heat ablation at the melting threshold}

At the melting threshold, heat input first balances radiation. The remaining heat drives ablation. Define the surface heat surplus at $T=\Tm$ by
\begin{equation}
  \Eh(z,v)=\frac{1}{2}\Lambda\rho(z)v^3-
  \chirad\epsilon\sigSB\left(\Tm^4-\Ta^4\right).
  \label{eq:excess}
\end{equation}
The common factor $\pi r^2$ has been divided out. Sustained ablation on $T=\Tm$ occurs when
\begin{equation}
  \Eh(z,v)>0.
  \label{eq:excess_positive}
\end{equation}
In this regime, temperature remains at $\Tm$ and the surplus heat produces mass loss:
\begin{equation}
  -\dot m=\frac{\Aheat\Eh(z,v)}{\Qstar}.
  \label{eq:mdot_surplus}
\end{equation}
Using $\dot m=4\pi\rhop r^2\dot r$ and $\Aheat=\pi r^2$ gives
\begin{equation}
  \dot r=-\frac{\Eh(z,v)}{4\rhop\Qstar},
  \qquad T=\Tm,
  \qquad \Eh>0.
  \label{eq:rdot_surplus}
\end{equation}

\subsection{Representative material parameters}

Table~\ref{tab:materials} lists representative values used in the numerical results.

\begin{table}[t]
\caption{Representative material parameters used in the reduced entry model.}
\label{tab:materials}
\begin{tabular}{lcccc}
Material & $\rhop$ & $C_p$ & $\Tm$ & $\Qstar$ \\
 & kg m$^{-3}$ & J kg$^{-1}$ K$^{-1}$ & K & MJ kg$^{-1}$ \\
\hline
Silicate & 3000 & 1000 & 1650 & 5.0 \\
Iron & 7800 & 450 & 1811 & 6.3 \\
Carbonaceous & 1500 & 1000 & 1500 & 4.0 \\
\end{tabular}
\end{table}

\section{Stopped well-posedness}
\label{sec:wellposed}

Because $m\propto r^3$, the reduced equations become singular at complete ablation. The natural mathematical object is therefore a stopped solution. Let
\begin{equation}
  \Omega_{r_{\min}}=\{(z,v,T,r): z>0,\ v>0,\ T>0,\ r>r_{\min}\},
\end{equation}
where $r_{\min}>0$ is the terminal radius used to mark complete ablation.

It is useful to write the threshold rule with a nonnegative mass-loss multiplier $\mu$. Let
\begin{equation}
  q(y)=Q_H-Q_R.
\end{equation}
The constrained thermal balance is
\begin{equation}
  mC_p\dot T=q(y)-\Qstar\mu,
  \qquad
  \dot m=-\mu,
  \qquad
  \mu\geq0.
  \label{eq:multiplier_balance}
\end{equation}
The temperature constraint is $T\leq \Tm$. On $T=\Tm$, the viability condition is $\dot T\leq0$. The minimal nonnegative choice of $\mu$ that enforces this condition is
\begin{equation}
  \mu=\frac{[q(y)]_+}{\Qstar}.
  \label{eq:mu_selection}
\end{equation}
This is the complementarity form of the surplus-heat rule.

\begin{proposition}[Existence up to a terminal event]
\label{prop:existence}
Assume $\rho(z)$ is smooth, all material coefficients are positive, and the initial condition satisfies $y_0\in\Omega_{r_{\min}}$ with $T_0\leq\Tm$. The hard-threshold model admits a Filippov/complementarity solution that can be continued until the first time that $z=0$, $r=r_{\min}$, $v=0$, or the trajectory leaves $\Omega_{r_{\min}}$. Away from $T=\Tm$, the solution is classical and unique. On $T=\Tm$, the selected boundary motion is given by Eq.~\eqref{eq:mu_selection}.
\end{proposition}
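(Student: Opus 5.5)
We would reduce the constrained system to a projected (sweeping-process) ODE on the closed set $K=\{y\in\Omega_{r_{\min}}: T\le \Tm\}$ and build the solution by concatenating classical pieces. On the open region $\Omega_{r_{\min}}\cap\{T<\Tm\}$ the vector field
\begin{equation*}
  F(y)=\Bigl(-v\sin\gamma,\ g\sin\gamma-\Gamma\tfrac{\rho(z)\Aheat}{m}v^2,\ \tfrac{q(y)}{mC_p},\ 0\Bigr)^T
\end{equation*}
is smooth. This is because $\rho$ is smooth and $m\ge\tfrac43\pi\rhop r_{\min}^3>0$ there. Hence $F$ is locally Lipschitz, and Picard--Lindel\"of gives a unique classical solution. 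This solution extends by the standard continuation argument until it either reaches $T=\Tm$ or hits $\partial\Omega_{r_{\min}}$. Blow-up in finite time is excluded on bounded time intervals: $\dot v\le g\sin\gamma$, $z$ decreases, $r$ is constant, and $T$ stays in $(0,\Tm]$ on this piece. The last point holds because the term $-\epsilon\sigSB\Arad T^4$ together with $Q_H\ge0$ keeps $T$ from reaching $0$ whenever $\Ta>0$. If $\Ta=0$ we would instead treat $T=0$ as part of leaving $\Omega_{r_{\min}}$.

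On the surface $T=\Tm$ we would use the multiplier formulation \eqref{eq:multiplier_balance}. The sign of $q(y)$ restricted to $T=\Tm$ equals the sign of $\Eh(z,v)$, since $q=\pi r^2\Eh$ there. This splits the surface into two parts:
\begin{itemize}
  \item the repelling/crossing part $\{\Eh<0\}$, where $\dot T<0$ and the trajectory leaves the surface immediately into $T<\Tm$, so the classical solution resumes;
  \item the sliding part $\{\Eh>0\}$, where $F$ points out of $K$.
\end{itemize}
On the sliding part the minimal $\mu$ keeping $\dot T\le 0$ is $\mu=q/\Qstar$, which is \eqref{eq:mu_selection}. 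This produces the reduced sliding system $\dot z=-v\sin\gamma$, $\dot v=\dots$, $\dot r=-\Eh/(4\rhop\Qstar)$, $T\equiv\Tm$. This is again a locally Lipschitz ODE in $(z,v,r)$, so it has a unique local solution. We would also check that it coincides with the Filippov convex combination of the field $F$ from $T<\Tm$ with the constrained field. Concretely, Filippov's sliding vector is the unique convex combination tangent to $\{T=\Tm\}$, and it reproduces $\mu=[q]_+/\Qstar$.

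Globally, the cleanest route is to note that $\mu=[q]_+/\Qstar$ makes the full right-hand side $\tilde F(y)=F(y)-\mathbf 1_{\{T=\Tm\}}(\dots)$ equal to a projection of $F$ onto the tangent cone of $K$. Existence then follows from the viability theorem (Nagumo/Aubin) for the upper semicontinuous convex-valued Filippov regularisation on the closed set $K$. Alternatively, one can argue by hand: alternate classical and sliding pieces, glue at switching times, and take $t^*$ as the supremum of the continuation times. At $t^*$ we show, using the a priori bounds above, that the limit $y(t^*)$ exists and either lies on $\partial\Omega_{r_{\min}}$ or one of $z=0$, $r=r_{\min}$, $v=0$ holds. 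Otherwise we restart, which contradicts maximality.

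The main obstacle is the accumulation of switching times, i.e.\ Zeno behaviour, at tangency points where $\Eh(z,v)=0$. Uniqueness across such points is not claimed, which is why the statement asserts uniqueness only away from $T=\Tm$. Existence through them is exactly what the Filippov/viability framework supplies. So we would invoke Filippov's existence theorem for the convexified inclusion restricted to $K$, rather than try to count switches.
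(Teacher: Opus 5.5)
Your proposal is correct and follows essentially the same route as the paper: Picard--Lindel\"of on $\{T<\Tm\}$, a convexified Filippov inclusion on $\Sigma$ whose sign-of-$q$ case split reproduces Eq.~\eqref{eq:mu_selection}, and continuation by local boundedness. Your a priori bounds and your appeal to the viability theorem make explicit the invariance of $\{T\le\Tm\}$, which the paper only asserts; one caveat is that $\tilde F$ is an oblique rather than Euclidean projection of $F$ onto the tangent cone (it also changes the $r$-component, $\dot r=-\Eh/(4\rhop\Qstar)$), but this is harmless because the viability argument only needs $\tilde F(y)\in T_K(y)$.
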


\begin{proof}
Fix a compact set $K\subset\Omega_{r_{\min}}$. On $K\cap\{T<\Tm\}$, the right-hand side is smooth because $r\geq r_{\min}>0$, $v>0$, and all coefficients are smooth. Classical local existence and uniqueness follow from the Picard--Lindelof theorem. On the boundary $T=\Tm$, the admissible velocities are described by the closed convex hull of the limiting vector fields from the non-ablating side and the boundary-ablation side. This set-valued map is locally bounded, upper semicontinuous, and has nonempty compact convex values on $K$ \cite{filippov1988,dibernardo2008}. Filippov's existence theorem therefore gives an absolutely continuous solution.

The multiplier formula \eqref{eq:mu_selection} selects a viable boundary velocity. If $q\leq0$, then $\mu=0$ and $\dot T=q/(mC_p)\leq0$, so the trajectory leaves or stays below the boundary. If $q>0$, then $\mu=q/\Qstar$ and $\dot T=0$, so the trajectory remains tangent to $T=\Tm$ while losing mass. Local boundedness on compact subsets gives continuation until the first terminal event listed in the statement. No continuation through $r=0$ is asserted.
\end{proof}

\section{Filippov/complementarity dynamics at the melting surface}
\label{sec:filippov}

The switching surface is
\begin{equation}
  \Sigma=\{(z,v,T,r): T=\Tm\}.
\end{equation}
Let $h(y)=T-\Tm$. The normal component of the pre-melting vector field is
\begin{equation}
  \dot h^+=\dot T^+=\frac{Q_H-Q_R}{mC_p}.
\end{equation}
On $T=\Tm$, this becomes
\begin{equation}
  \dot h^+=\frac{\pi r^2}{mC_p}\Eh(z,v).
  \label{eq:hplus}
\end{equation}

\begin{lemma}[Boundary selection]
\label{lem:boundary_selection}
On $T=\Tm$, the complementarity rule selects
\begin{equation}
  \mu=\frac{[Q_H-Q_R]_+}{\Qstar}.
\end{equation}
Consequently,
\begin{align}
  \dot T&=0 \quad\text{when } Q_H>Q_R,\\
  \dot T&=\frac{Q_H-Q_R}{mC_p}\leq0
  \quad\text{when } Q_H\leq Q_R.
\end{align}
Since $Q_H-Q_R=\pi r^2\Eh(z,v)$, sustained melting occurs exactly when $\Eh(z,v)>0$.
\end{lemma}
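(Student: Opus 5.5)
The argument is a direct case analysis of the constrained balance \eqref{eq:multiplier_balance} restricted to $\Sigma$, using the selection rule \eqref{eq:mu_selection} already derived in the proof of Proposition~\ref{prop:existence}. The plan has three steps.

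\textbf{Step 1: the multiplier formula.} On $\Sigma$ the temperature constraint $T\leq\Tm$ is active, so viability requires $\dot T\leq0$. By \eqref{eq:multiplier_balance}, $mC_p\dot T=q-\Qstar\mu$, so viability is equivalent to
\begin{equation*}
\mu\geq q/\Qstar .
\end{equation*}
Together with $\mu\geq0$, the admissible set is $\mu\geq\max\{0,q/\Qstar\}$. Its minimal element is therefore $[q]_+/\Qstar=[Q_H-Q_R]_+/\Qstar$.

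I would also remark why minimality is the right selection. If $\mu$ exceeded this value while $q>0$, then $\dot T<0$ and mass would be lost while the particle cools strictly below $\Tm$. That contradicts the complementarity pairing between $\mu$ and $h=T-\Tm$, namely $\mu\geq0$, $h\leq0$, $\mu h=0$, carried along the boundary motion. Complementarity thus forces the minimal choice, which coincides with \eqref{eq:mu_selection}.

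\textbf{Step 2: the two cases.} If $Q_H>Q_R$, substituting $\mu=q/\Qstar$ gives $mC_p\dot T=0$, so $\dot T=0$. If $Q_H\leq Q_R$, then $\mu=0$, so $\dot T=(Q_H-Q_R)/(mC_p)\leq0$, since $m$ and $C_p$ are positive.

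\textbf{Step 3: the factorization.} Evaluate $Q_H-Q_R$ at $T=\Tm$ using \eqref{eq:qheat}, \eqref{eq:qrad}, $\Aheat=\pi r^2$ and $\Arad=\chirad\pi r^2$. This gives
\begin{equation*}
Q_H-Q_R=\pi r^2\Eh(z,v),
\end{equation*}
consistent with \eqref{eq:hplus}. Because $r>r_{\min}>0$ in $\Omega_{r_{\min}}$, the sign of $Q_H-Q_R$ equals the sign of $\Eh$.

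Hence $\mu>0$ together with $\dot T=0$, that is, the trajectory stays on $\Sigma$ while losing mass, holds exactly when $\Eh>0$.

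\textbf{Main obstacle.} The only delicate point is the word ``sustained'' in the degenerate case $\Eh=0$. There $\mu=0$ and $\dot T=0$, so the trajectory is tangent to $\Sigma$ but no ablation occurs. I would handle this by defining sustained melting as $\mu>0$, in line with \eqref{eq:excess_positive}, so that the boundary case is excluded and the equivalence with $\Eh>0$ is exact.
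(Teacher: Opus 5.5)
Your proposal is correct and follows essentially the same route as the paper: substitute the selected multiplier into the constrained balance \eqref{eq:multiplier_balance}, split on the sign of $Q_H-Q_R$, and evaluate $Q_H-Q_R=\pi r^2\Eh(z,v)$ at $T=\Tm$. Your derivation of the minimal admissible $\mu$ and your treatment of the tangential case $\Eh=0$ go slightly beyond the paper, which takes the selection rule \eqref{eq:mu_selection} as given and defines ``sustained'' melting through the strict inequality \eqref{eq:excess_positive}.
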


\begin{proof}
If $Q_H\leq Q_R$, then $[Q_H-Q_R]_+=0$ and the multiplier rule gives $\mu=0$. Equation \eqref{eq:multiplier_balance} reduces to $mC_p\dot T=Q_H-Q_R\leq0$. If $Q_H>Q_R$, then $\mu=(Q_H-Q_R)/\Qstar$ and Eq.~\eqref{eq:multiplier_balance} gives
\begin{equation}
  mC_p\dot T=Q_H-Q_R-\Qstar\frac{Q_H-Q_R}{\Qstar}=0.
\end{equation}
This proves the two cases. The identity $Q_H-Q_R=\pi r^2\Eh(z,v)$ follows directly from Eqs.~\eqref{eq:qheat} and \eqref{eq:qrad} evaluated at $T=\Tm$.
\end{proof}

It is convenient to write the threshold ratio as
\begin{equation}
  S_\chi(z,v)=\frac{\frac12\Lambda\rho(z)v^3}{\chirad\epsilon\sigSB(\Tm^4-\Ta^4)}.
\end{equation}
Then $\Eh(z,v)>0$ is equivalent to $S_\chi(z,v)>1$.

\section{Allen--Eggers threshold estimate}
\label{sec:AE}

The reduced model admits a clean analytical threshold estimate under the usual Allen--Eggers assumptions.

\begin{assumption}
\label{ass:AE}
Assume: (i) $\rho(z)=\rho_0e^{-z/H}$; (ii) free molecular drag with constant $\Gamma$; (iii) constant radius during the heating estimate; (iv) constant entry angle $\gamma$; (v) negligible gravity during the main heating interval; (vi) constant $\Gamma$, $\Lambda$, $\epsilon$, $\chirad$, $\Tm$, and $\Ta$; and (vii) entry begins high enough that $u_0=e^{-z_0/H}\approx0$.
\end{assumption}

\begin{theorem}[Exact threshold scaling inside the Allen--Eggers problem]
\label{thm:AE_scaling}
Under Assumption~\ref{ass:AE}, the critical initial radius for first reaching the melting threshold is
\begin{equation}
  \rcrit=
  \frac{9\Gamma H e\,\chirad\epsilon\sigSB(\Tm^4-\Ta^4)}{2\Lambda\rhop\sin\gamma\,v_0^3}.
  \label{eq:rcrit_formula}
\end{equation}
In particular,
\begin{equation}
  \rcrit\propto v_0^{-3}.
\end{equation}
\end{theorem}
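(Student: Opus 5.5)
Under Assumption~\ref{ass:AE} the plan is to solve the trajectory in closed form and then apply the sustained-melting criterion of Lemma~\ref{lem:boundary_selection}, namely $S_\chi(z,v)>1$. The critical radius is the one for which the largest value of $S_\chi$ along the trajectory equals exactly $1$. The argument has four steps.

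\textbf{Step 1: the Allen--Eggers trajectory.} Drop gravity and hold $r=r_0$ and $\gamma$ fixed. Divide Eq.~\eqref{eq:vdot} by Eq.~\eqref{eq:zdot}:
\begin{equation}
  \frac{\dd v}{\dd z}=\frac{\Gamma\rho(z)\Aheat}{m\sin\gamma}\,v .
\end{equation}
With $\Aheat/m=3/(4\rhop r_0)$ and $u=e^{-z/H}$, this is linear in $\ln v$. Integrating from $u_0\approx0$ gives
\begin{equation}
  v=v_0 e^{-Bu},
  \qquad
  B=\frac{3\Gamma\rho_0 H}{4\rhop r_0\sin\gamma}.
\end{equation}

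\textbf{Step 2: maximise the heating function.} Along this path,
\begin{equation}
  \rho v^3=\rho_0 v_0^3\,u\,e^{-3Bu}.
\end{equation}
This is maximised over $u>0$ at $u^*=1/(3B)$, where its value is
\begin{equation}
  \frac{\rho_0 v_0^3}{3Be}=\frac{4\rhop r_0\sin\gamma\,v_0^3}{9\Gamma H e}.
\end{equation}
The factor $\rho_0$ cancels, as it should. Whether $u^*$ falls in the admissible range $u\in(u_0,e^{0})$ is a finite-altitude issue. Here I will assume it does (deep penetration) and note that the finite-altitude correction is treated separately.

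\textbf{Step 3: impose the threshold.} Set $\max S_\chi=1$:
\begin{equation}
  \tfrac12\Lambda\,\frac{4\rhop r_0\sin\gamma\,v_0^3}{9\Gamma He}
  =\chirad\epsilon\sigSB(\Tm^4-\Ta^4).
\end{equation}
Solving for $r_0$ gives Eq.~\eqref{eq:rcrit_formula} directly.

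\textbf{Step 4: direction of the threshold and interpretation.} The peak value of $S_\chi$ is proportional to $r_0$, so particles with $r_0<\rcrit$ never satisfy $\Eh>0$, while larger ones do. The proportionality $\rcrit\propto v_0^{-3}$ is then immediate.

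The main obstacle is making precise what "first reaching the melting threshold" means. The quantity $S_\chi>1$ is the condition for \emph{sustained} melting once $T=\Tm$, whereas actually reaching $T=\Tm$ depends on the heat capacity through Eq.~\eqref{eq:pre_melt_energy}. I would handle this by adopting the quasi-steady interpretation implicit in the reduced threshold model: reaching $\Tm$ is identified with the radiative-equilibrium temperature reaching $\Tm$, i.e.\ $Q_H=Q_R$ at $T=\Tm$. The theorem is then stated as exact within that threshold definition, together with approximation (vii). I would also check that the maximiser is unique and interior, so the threshold is a single crossing.
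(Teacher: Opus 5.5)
Your proposal is correct and follows the paper's proof essentially step for step: the Allen--Eggers profile in $u=e^{-z/H}$, the peak of $\rho v^3$ at $u^*=1/(3\beta)$ with $v^*=v_0e^{-1/3}$, and the threshold equality $\tfrac12\Lambda\rho^*(v^*)^3=\chirad\epsilon\sigSB(\Tm^4-\Ta^4)$ solved for $r_0$. Your Step~4 monotonicity remark and your quasi-steady reading of ``first reaching $\Tm$'' spell out what the paper's proof uses without stating; with finite $C_p$, the same argument shows only that $\rcrit$ is a lower bound on the radius needed to actually reach $\Tm$.
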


\begin{proof}
Under Assumption~\ref{ass:AE}, the equations are
\begin{equation}
  \dot v=-\Gamma\frac{\rho \Aheat}{m}v^2,
  \qquad
  \dot z=-v\sin\gamma,
\end{equation}
with $\Aheat/m=3/(4\rhop r_0)$. Hence
\begin{equation}
  \frac{\dd v}{\dd z}=\frac{\dot v}{\dot z}=\Gamma\frac{\rho \Aheat}{m\sin\gamma}v.
\end{equation}
Set $u=e^{-z/H}$. Since $\dd u/\dd z=-u/H$,
\begin{equation}
  \frac{\dd\ln v}{\dd u}=
  \frac{\dd\ln v/\dd z}{\dd u/\dd z}
  =-\beta,
  \qquad
  \beta=\frac{\Gamma\rho_0\Aheat H}{m\sin\gamma}
  =\frac{3\Gamma\rho_0 H}{4\rhop r_0\sin\gamma}.
\end{equation}
Integrating gives
\begin{equation}
  v(u)=v_0e^{-\beta(u-u_0)}.
\end{equation}
The heating factor is
\begin{equation}
  \rho v^3=\rho_0u\,v_0^3e^{-3\beta(u-u_0)}.
\end{equation}
Differentiating $\ln(\rho v^3)$ with respect to $u$ yields
\begin{equation}
  \frac{\dd}{\dd u}\ln(\rho v^3)=\frac1u-3\beta.
\end{equation}
The peak occurs at $u^*=1/(3\beta)$. Therefore
\begin{equation}
  v^*=v_0e^{-1/3},
  \qquad
  \rho^*=\rho_0u^*=\frac{4\rhop r_0\sin\gamma}{9\Gamma H}.
\end{equation}
At the melting threshold,
\begin{equation}
  \frac12\Lambda\rho^*(v^*)^3=\chirad\epsilon\sigSB(\Tm^4-\Ta^4).
\end{equation}
Substituting $\rho^*$ and $v^*$ gives Eq.~\eqref{eq:rcrit_formula}.
\end{proof}

\begin{remark}[Finite-altitude correction]
\label{rem:finite_u0}
If $u_0=e^{-z_0/H}$ is retained, the same calculation gives
\begin{equation}
  \rho^*(v^*)^3=\frac{\rho_0v_0^3}{3\beta}e^{-1+3\beta u_0},
\end{equation}
and the threshold condition becomes the implicit relation
\begin{equation}
  r_0\exp\!\left(\frac{3bu_0}{r_0}\right)=Cv_0^{-3},
\end{equation}
where
\begin{equation}
  b=\frac{3\Gamma\rho_0H}{4\rhop\sin\gamma},
  \qquad
  C=\frac{9\Gamma H e\,\chirad\epsilon\sigSB(\Tm^4-\Ta^4)}{2\Lambda\rhop\sin\gamma}.
\end{equation}
Thus the pure inverse-cubic law is the high-entry-altitude limit of the finite-altitude relation.
\end{remark}

Figure~\ref{fig:threshold} compares the analytical threshold with the numerical threshold for silicate particles. The analytical curve captures the inverse-cubic trend and the correct ordering in radius-speed space.

\begin{figure}[t]
  \centering
  \includegraphics[width=\columnwidth]{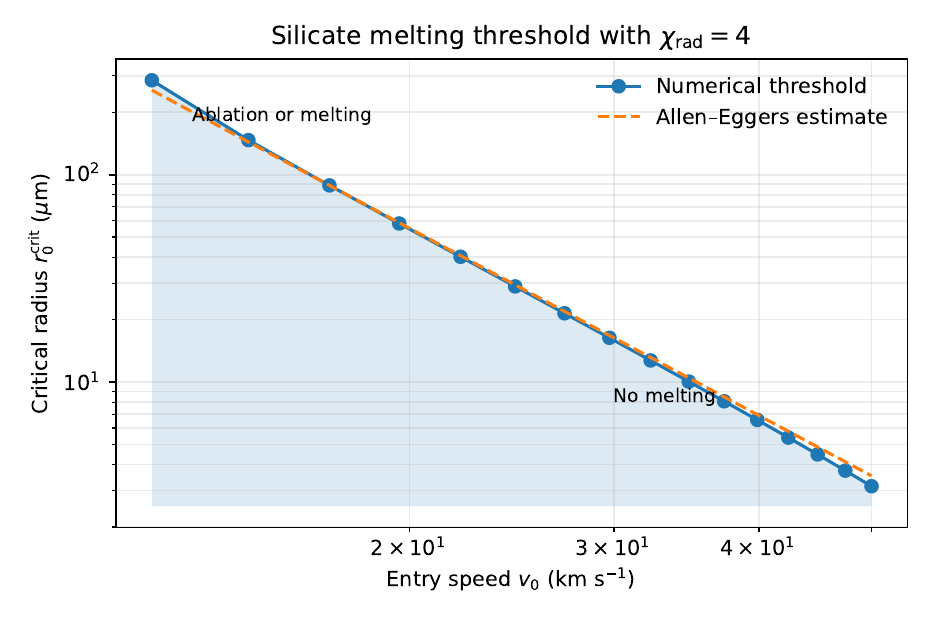}
  \caption{Silicate melting threshold with $\chirad=4$. Circles show the numerical threshold for first reaching $T=\Tm$. The dashed curve is the Allen--Eggers estimate from Theorem~\ref{thm:AE_scaling}. The geometry factor changes the normalisation and leaves the inverse-cubic speed dependence unchanged.}
  \label{fig:threshold}
\end{figure}

\section{Nondimensional form}
\label{sec:nondim}

Set
\begin{equation}
  z=HZ,\qquad v=v_0V,\qquad r=r_0R,\qquad T=\Tm\Theta,
\end{equation}
and use
\begin{equation}
  t=\frac{H}{v_0\sin\gamma}\tau.
\end{equation}
The ambient-temperature ratio is
\begin{equation}
  \Theta_a=\frac{\Ta}{\Tm}.
\end{equation}
Before melting, $R=1$ in the thermal equation if radius is held fixed, and the general pre-melting nondimensional equations are
\begin{align}
  \frac{\dd Z}{\dd\tau}&=-V,\\
  \frac{\dd V}{\dd\tau}&=\Pi_g-\Pi_D e^{-Z}R^{-1}V^2,\\
  \frac{\dd\Theta}{\dd\tau}&=
  \Pi_H e^{-Z}R^{-1}V^3-
  \Pi_R R^{-1}(\Theta^4-\Theta_a^4).
\end{align}
The dimensionless groups are
\begin{align}
  \Pi_g&=\frac{gH}{v_0^2},\\
  \Pi_D&=\frac{3\Gamma\rho_0H}{4\rhop r_0\sin\gamma},\\
  \Pi_H&=\frac{3\Lambda\rho_0Hv_0^2}{8\rhop r_0 C_p\Tm\sin\gamma},\\
  \Pi_R&=\frac{3\chirad\epsilon\sigSB H\Tm^3}{4\rhop r_0 C_p v_0\sin\gamma}.
\end{align}
On the melting surface, define the dimensionless radius-loss rate $\mu_R$ by
\begin{equation}
  \frac{\dd R}{\dd\tau}=-\mu_R.
\end{equation}
This $\mu_R$ is a radius-loss rate and is distinct from the dimensional mass-loss multiplier $\mu$ in Eq.~\eqref{eq:multiplier_balance}. The surplus-heat rule gives
\begin{equation}
  \mu_R=
  \left[
  \Pi_A e^{-Z}V^3-
  \Pi_Q(1-\Theta_a^4)
  \right]_+,
\end{equation}
where
\begin{align}
  \Pi_A&=\frac{\Lambda\rho_0Hv_0^2}{8\rhop\Qstar r_0\sin\gamma},\\
  \Pi_Q&=\frac{\chirad\epsilon\sigSB H\Tm^4}{4\rhop\Qstar r_0 v_0\sin\gamma}.
\end{align}
The local threshold is equivalently
\begin{equation}
  S_\chi=\frac{\Pi_A e^{-Z}V^3}{\Pi_Q(1-\Theta_a^4)}.
\end{equation}
Sustained melting and ablation occur when $S_\chi>1$. This form keeps the geometry factor explicit. It enters the cooling and threshold normalisation, while the cubic dependence on entry speed comes from the kinetic heating term.

\section{Radius-loss identity and perturbative accuracy}
\label{sec:radiusloss}

The Allen--Eggers trajectory also gives an exact radius-loss identity within the prescribed reduced problem.

\begin{theorem}[Radius loss along the prescribed Allen--Eggers trajectory]
\label{thm:radius_loss}
Assume the particle follows the Allen--Eggers velocity profile
\begin{equation}
  v(u)=v_0e^{-\beta(u-u_0)},
  \qquad u=e^{-z/H},
\end{equation}
with fixed radius in the drag law. Assume further that the particle is on the melting surface $T=\Tm$ and that radius loss follows Eq.~\eqref{eq:rdot_surplus}. Then the radius loss between $u_m$ and $u_f$ is
\begin{multline}
\Delta r=
\frac{H}{4\rhop\Qstar\sin\gamma}
\int_{u_m}^{u_f}
\bigg[
\frac12\Lambda\rho_0v_0^2e^{-2\beta(u-u_0)}\\
-
\frac{\chirad\epsilon\sigSB(\Tm^4-\Ta^4)}{uv_0}
 e^{\beta(u-u_0)}
\bigg]_+\dd u.
\label{eq:dr_integral}
\end{multline}
If the bracket is positive on the full interval, then
\begin{align}
\Delta r&=
\frac{H}{4\rhop\Qstar\sin\gamma}
\Bigg[
\frac{\Lambda\rho_0v_0^2}{4\beta}e^{2\beta u_0}
\Big(e^{-2\beta u_m}-e^{-2\beta u_f}\Big)
\nonumber\\
&\quad-
\frac{\chirad\epsilon\sigSB(\Tm^4-\Ta^4)}{v_0}e^{-\beta u_0}
\Big(\Ei(\beta u_f)-\Ei(\beta u_m)\Big)
\Bigg].
\label{eq:dr_closed}
\end{align}
\end{theorem}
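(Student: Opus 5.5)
The plan is to convert the time-domain radius equation \eqref{eq:rdot_surplus} into an equation in the atmospheric variable $u=e^{-z/H}$, using the prescribed Allen--Eggers kinematics, and then integrate term by term. First I will check that $u$ is a legitimate monotone parameter along the trajectory. From Eq.~\eqref{eq:zdot}, $\dot z=-v\sin\gamma$, and $\dd u/\dd z=-u/H$, so
\begin{equation*}
  \dot u=\frac{u\,v\sin\gamma}{H}>0
\end{equation*}
whenever $v>0$, $u>0$ and $0<\gamma\le\pi/2$. Hence $t\mapsto u$ is strictly increasing on the descending portion of the trajectory, and the interval from $u_m$ to $u_f$ corresponds to a time interval with $0<u_m<u_f$. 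I will interpret $\Delta r$ as the radius \emph{lost}, namely $r(u_m)-r(u_f)\ge0$. That sign convention must be fixed explicitly so that it matches the positive prefactor in Eq.~\eqref{eq:dr_integral}.

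Next, by Lemma~\ref{lem:boundary_selection}, on $T=\Tm$ the radius-loss rate is $\dot r=-[\Eh(z,v)]_+/(4\rhop\Qstar)$. This form covers the sliding case $\Eh>0$, and it also covers the case $\Eh\le0$, where $\mu=0$ and no radius is lost. Dividing by $\dot u$ gives
\begin{equation*}
  -\frac{\dd r}{\dd u}=\frac{H}{4\rhop\Qstar\sin\gamma}\,\frac{[\Eh]_+}{u\,v}.
\end{equation*}
Since $uv>0$, the positive part commutes with division by $uv$. I then substitute $\rho=\rho_0u$ and $v=v_0e^{-\beta(u-u_0)}$ into Eq.~\eqref{eq:excess}:
\begin{itemize}[leftmargin=*]
  \item the heating term becomes $\tfrac12\Lambda\rho_0 u v^3/(uv)=\tfrac12\Lambda\rho_0v_0^2e^{-2\beta(u-u_0)}$;
  \item the radiation term becomes $\chirad\epsilon\sigSB(\Tm^4-\Ta^4)\,e^{\beta(u-u_0)}/(u v_0)$.
\end{itemize}
Integrating from $u_m$ to $u_f$ then yields Eq.~\eqref{eq:dr_integral}. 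The integrand is continuous on $[u_m,u_f]\subset(0,\infty)$, so the integral exists and the fundamental theorem of calculus applies to the absolutely continuous $r$.

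For the closed form \eqref{eq:dr_closed}, I assume the bracket is positive throughout, so the positive part can be dropped and the integral splits into two elementary pieces.
\begin{itemize}[leftmargin=*]
  \item Heating piece: $\int_{u_m}^{u_f}e^{-2\beta u}\,\dd u=(e^{-2\beta u_m}-e^{-2\beta u_f})/(2\beta)$. Multiplying by $\tfrac12\Lambda\rho_0v_0^2e^{2\beta u_0}$ gives the first term, with the factor $1/(4\beta)$.
  \item Radiation piece: substitute $s=\beta u$, so that $\int_{u_m}^{u_f}u^{-1}e^{\beta u}\,\dd u=\int_{\beta u_m}^{\beta u_f}s^{-1}e^{s}\,\dd s=\Ei(\beta u_f)-\Ei(\beta u_m)$. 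This uses $\Ei'(s)=e^s/s$ and the fact that both limits are positive. Multiplying by $e^{-\beta u_0}$ gives the second term.
\end{itemize}
I need to note that $\Ei$ is defined by a principal value at $s=0$, but on $(0,\infty)$ it is smooth with derivative $e^s/s$. Since the interval never crosses $s=0$, no singularity enters the difference.

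The integration itself is routine. The step needing the most care is the justification of the change of variables and of the positive part. That means confirming that $u$ is strictly monotone along the stopped solution of Proposition~\ref{prop:existence}, so that every sub-interval where $\Eh\le0$ correctly contributes zero through the multiplier rule \eqref{eq:mu_selection}. It also means fixing the orientation of $\Delta r$ as a loss. I will also state explicitly that the drag law uses the fixed radius $r_0$ inside $\beta$, while $r$ varies only through the ablation equation. This is an internal inconsistency of the prescribed trajectory, which the theorem accepts by hypothesis; it is not something the proof has to resolve.
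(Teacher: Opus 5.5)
Your proposal is correct and follows the paper's own argument. You rewrite $\dot r$ as $\dd r/\dd u$ using $\dot u = uv\sin\gamma/H$, substitute $\rho=\rho_0u$ and the Allen--Eggers profile, and integrate the two pieces into an exponential difference and an $\Ei$ difference, which is the same sequence of steps the paper uses; your added care about the monotonicity of $u$, the sign convention for $\Delta r$, and evaluating $\Ei$ only on $(0,\infty)$ makes explicit what the paper leaves implicit.
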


\begin{proof}
Along the melting surface,
\begin{multline}
  \dot r=-\frac{1}{4\rhop\Qstar}
  \bigg[
  \frac12\Lambda\rho_0u\,v_0^3e^{-3\beta(u-u_0)}\\
  -\chirad\epsilon\sigSB(\Tm^4-\Ta^4)
  \bigg]_+.
\end{multline}
Since $u=e^{-z/H}$ and $\dot z=-v\sin\gamma$,
\begin{equation}
  \dot u=\frac{uv\sin\gamma}{H},
  \qquad
  \dd t=\frac{H}{uv\sin\gamma}\dd u.
\end{equation}
Therefore
\begin{equation}
  \Delta r=-\int \dot r\,\dd t
  =\frac{H}{4\rhop\Qstar\sin\gamma}
  \int_{u_m}^{u_f}\frac{[\cdots]_+}{uv(u)}\dd u.
\end{equation}
Substituting $v(u)=v_0e^{-\beta(u-u_0)}$ gives Eq.~\eqref{eq:dr_integral}. If the positive part is active throughout the interval, the first term integrates to an exponential and the second to an exponential integral, which yields Eq.~\eqref{eq:dr_closed}.
\end{proof}

\begin{proposition}[Perturbative accuracy in the reduced model]
\label{prop:stability}
Let $\Delta r_{\rm AE}$ be the radius loss computed from Eq.~\eqref{eq:dr_integral} and let $\Delta r_{\rm full}$ be the radius loss in the full reduced model. Consider a compact heating interval $I$ on which $r\ge r_{\min}>0$, $v\ge v_{\min}>0$, the coefficients remain bounded, and the positive-part integrand is locally Lipschitz. Define the fractional Allen--Eggers radius loss by
\begin{equation}
  \chi=\frac{\Delta r_{\rm AE}}{r_0}
\end{equation}
and the gravity-to-drag ratio on $I$ by
\begin{equation}
  \mu_g=\sup_I
  \frac{g\sin\gamma}{\Gamma\rho\Aheat v^2/m}.
\end{equation}
If $\chi<1$ and both trajectories are compared on the same compact interval $I$, then there exists a constant $C_I$ depending only on the compact-domain bounds and Lipschitz constants such that
\begin{equation}
  \abs{\Delta r_{\rm full}-\Delta r_{\rm AE}}
  \le C_I r_0\chi(\chi+\mu_g).
  \label{eq:stability_bound}
\end{equation}
\end{proposition}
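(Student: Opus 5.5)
We will treat the two radius losses as integrals of the same integrand along two different trajectories, and control the difference with a Gronwall-type perturbation argument. We parametrize both trajectories by $u=e^{-z/H}$ on the compact interval $I$; this is legitimate because $\dot u=uv\sin\gamma/H>0$ when $v\ge v_{\min}$. As in the proof of Theorem~\ref{thm:radius_loss}, write
\begin{equation*}
  \Delta r=\frac{H}{4\rhop\Qstar\sin\gamma}\int_I \frac{[\Eh(u,v)]_+}{u\,v}\,\dd u .
\end{equation*}
Denote the integrand by $F(u,v)$. By hypothesis it is locally Lipschitz in $v$ on the compact domain, with constant $L_I$. It follows that
\begin{equation*}
  \abs{\Delta r_{\rm full}-\Delta r_{\rm AE}}\le C\,L_I\,\abs{I}\,\sup_I\abs{v_{\rm full}-v_{\rm AE}},
\end{equation*}
so the whole task reduces to bounding the velocity discrepancy. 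The target is
\begin{equation*}
  \sup_I \frac{\abs{v_{\rm full}-v_{\rm AE}}}{v}\lesssim \chi+\mu_g .
\end{equation*}
The additional factor $\chi$ in \eqref{eq:stability_bound} should come from the fact that $\Delta r_{\rm AE}$ is itself of size $r_0\chi$. In other words, the difference is a relative perturbation of an integral that is already small.

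The velocity discrepancy comes from two error sources, each handled in the $u$-variable via $\dd\ln v/\dd u$. The full reduced model gives
\begin{equation*}
  \frac{\dd \ln v}{\dd u}=-\beta\,\frac{r_0}{r(u)}+\frac{g\sin\gamma}{u v^2\sin\gamma/H}\cdot(\dots),
\end{equation*}
whereas the Allen--Eggers model has $-\beta$.

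\textbf{(a) Radius variation.} Since $r$ decreases monotonically by at most $\Delta r_{\rm full}$, we have $\abs{r_0/r-1}\le \Delta r/(r_0-\Delta r)$. A bootstrap will keep this of order $\chi$, using the hypothesis $\chi<1$ together with $r\ge r_{\min}$.

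\textbf{(b) Gravity.} Relative to the drag term, the gravity term is bounded by $\mu_g\beta r_0/r$ by the definition of $\mu_g$.

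Integrating the difference of the two log-derivatives over $I$ then gives
\begin{equation*}
  \abs{\ln v_{\rm full}-\ln v_{\rm AE}}\le \beta\abs{I}\,C(\chi+\mu_g).
\end{equation*}
Converting this to a bound on $\abs{v_{\rm full}-v_{\rm AE}}$ uses $v\le v_0$ and the mean value theorem. For the resulting constant to be uniform we need $\beta\abs{I}$ bounded on the compact interval, which we absorb into $C_I$.

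To obtain the product structure $\chi(\chi+\mu_g)$ rather than merely $\chi+\mu_g$, we will write $F(u,v_{\rm full})-F(u,v_{\rm AE})$ as a derivative factor times $\delta v$, and observe that the active part of the integrand is itself comparable to the Allen--Eggers integrand. Concretely, on the set where the bracket is positive, the Lipschitz bound for $[\cdot]_+$ can be made relative: $\abs{\partial_v F}\lesssim F/v$ plus a radiative piece. Integrating then yields at most $C_I(\chi+\mu_g)\Delta r_{\rm AE}=C_I r_0\chi(\chi+\mu_g)$.

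\textbf{Main obstacle.} The hardest step is this relative Lipschitz bound near the kink of $[\cdot]_+$. There the positive part can switch on along one trajectory and off along the other, and the integrand is not small relative to its own size. My first attempt will be to estimate the contribution of the switching region directly: its $u$-measure is $O(\delta v)$ by transversality of $\Eh=0$, since $\partial_u\Eh\neq0$ away from the peak. On that region the integrand is itself $O(\delta v)$, so the region contributes $O(\delta v^2)$, which is of higher order and is absorbed into $C_I$. If transversality fails, I would fall back on the bare Lipschitz hypothesis and let $C_I$ absorb the worst case.

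Finally, the bootstrap in (a) must be closed. We assume $\Delta r_{\rm full}\le 2r_0\chi$ on a maximal subinterval and use the bound just derived to show this assumption persists across all of $I$. This step uses $\chi<1$ and the fact that both trajectories are compared on the same compact interval $I$.
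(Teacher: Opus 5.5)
Up to the velocity estimate your route is the paper's. Both arguments work in the altitude variable. Both attribute the discrepancy to an $\order{\chi}$ change in the drag coefficient from radius drift plus an $\order{\mu_g}$ gravity perturbation, obtain $\sup_I\abs{v_{\rm full}-v_{\rm AE}}\le C(\chi+\mu_g)$, and push this through the Lipschitz integrand.

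\textbf{Where the argument breaks.} The gap is the step that is supposed to produce the extra factor $\chi$. Write $R_{\rm rad}=\chirad\epsilon\sigSB(\Tm^4-\Ta^4)$ and $F=[\tfrac12\Lambda\rho_0uv^3-R_{\rm rad}]_+/(uv)$. On the active set, $\partial_vF=2F/v+3R_{\rm rad}/(uv^2)$. The radiative piece does not vanish where $F$ does, so its contribution is of order $\sup_I\abs{\delta v}$ times the length of the active set. That length is controlled by $\Delta r_{\rm AE}$ only if $F_{\rm AE}$ is bounded below on $I$. Your switching-region fallback does not rescue this, for two reasons:
\begin{itemize}
\item Even granting $\order{\delta v^2}=\order{(\chi+\mu_g)^2}$, the $\mu_g^2$ part is not bounded by $C_I\chi(\chi+\mu_g)$ when $\chi\ll\mu_g$, so it cannot be absorbed into $C_I$.
\item Transversality degenerates exactly when $\chi$ is small. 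The active set then shrinks onto the heating peak $u^*=1/(3\beta)$, where $\partial_u(\rho v^3)=0$. The bracket's $u$-derivative at the endpoints is therefore only of order $\chi^{1/3}$, and the active length also scales like $\chi^{1/3}$, not $\chi$.
\end{itemize}
The bootstrap $\Delta r_{\rm full}\le 2r_0\chi$ has the same defect: closing it needs $C_I(\chi+\mu_g)<1$, which $\chi<1$ does not give.

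\textbf{Why a sharper estimate cannot fix it.} You and the paper both use a quasi-steady description: each radius loss is $P\int[\,\cdot\,]_+/(uv)\,\dd u$ along its own trajectory, with $P=H/(4\rhop\Qstar\sin\gamma)$. Within that description the product bound is false near threshold.
\begin{itemize}
\item Suppose the Allen--Eggers heating peak sits exactly at threshold, so $\chi=0$. Gravity makes $v_{\rm full}>v_{\rm AE}$ pointwise, so the full bracket is positive near $u^*$ and $\Delta r_{\rm full}>0$. The right-hand side of \eqref{eq:stability_bound} vanishes.
\item More generally, expand the bracket as $\delta-k(u-u^*)^2$, which gives $\Delta r_{\rm AE}\propto\delta^{3/2}$. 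A uniform relative velocity shift $\varepsilon$ (as gravity produces) then changes the radius loss by order $\delta^{1/2}\varepsilon\sim\chi^{1/3}\varepsilon$ for $\varepsilon\ll\delta$, not $\chi\varepsilon$.
\item Even the radius drift alone, with gravity switched off, gives a discrepancy of order $\chi^{5/3}$ against the claimed $\chi^2$.
\end{itemize}

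\textbf{What is missing.} The missing ingredient is a hypothesis, not a lemma: a margin $\eta=\inf_IF_{\rm AE}>0$, meaning the Allen--Eggers trajectory is in sustained ablation throughout $I$. Then $r_0\chi=\Delta r_{\rm AE}\ge P\abs{I}\eta$. The plain Lipschitz bound gives $\abs{\Delta r_{\rm full}-\Delta r_{\rm AE}}\le P\abs{I}L\sup_I\abs{\delta v}\le (LC/\eta)\,r_0\chi(\chi+\mu_g)$, with no switching-region analysis at all.

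The paper's step of ``multiplying by the interval length and by the prefactor'' tacitly uses $P\abs{I}\lesssim r_0\chi$, which is exactly this margin, so its proof has the same hole. You correctly located where the factor $\chi$ has to come from, but the relative-Lipschitz argument does not supply it.
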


\begin{proof}
Write the full reduced model and the frozen-radius Allen--Eggers model as two nonautonomous systems driven by the same altitude variable on the common interval $I$. Because $r(t)=r_0+\order{\chi r_0}$ on $I$, the drag coefficient $\Gamma\rho\Aheat/m$ differs from its frozen-radius value by $\order{\chi}$. The explicit gravity term contributes an additional perturbation of size $\order{\mu_g}$ relative to drag. Standard variation-of-constants estimates for locally Lipschitz systems then give
\begin{equation}
  \sup_I \abs{v_{\rm full}-v_{\rm AE}}
  \le C_1(\chi+\mu_g),
\end{equation}
for a constant $C_1$ depending only on bounds on $I$. Since the surplus integrand in Eq.~\eqref{eq:dr_integral} is locally Lipschitz in $v$, $r$, and $u$, the induced error in the radius-loss integrand is also $\order{\chi+\mu_g}$. Multiplying by the interval length and by the prefactor in Eq.~\eqref{eq:dr_integral} yields an $\order{\chi(\chi+\mu_g)}$ correction to the radius loss. This gives Eq.~\eqref{eq:stability_bound} after absorbing all compact-domain bounds into a single constant $C_I$.
\end{proof}

Figure~\ref{fig:radiusloss} compares Eq.~\eqref{eq:dr_integral} with direct numerical integration. The estimate captures the ordering of radius loss across the sweep. At larger losses it overestimates the numerical result because radius evolution and deceleration modify the heating history relative to the frozen-radius Allen--Eggers trajectory.

\begin{figure}[t]
  \centering
  \includegraphics[width=\columnwidth]{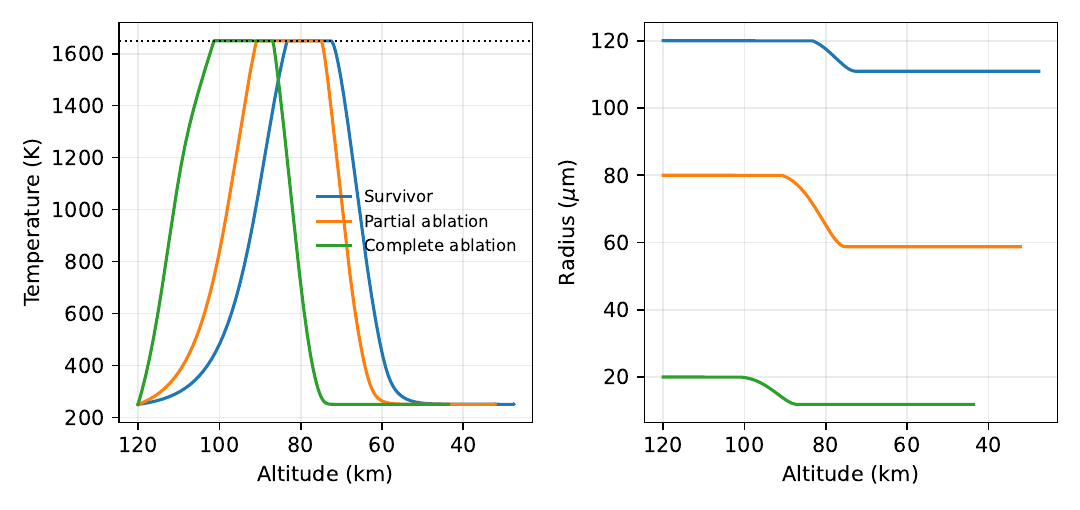}
  \caption{Representative thermal and radius histories for three silicate trajectories. The full spherical radiative area lowers the temperature rise and shifts the onset of sustained ablation to larger radii than the projected-area convention.}
  \label{fig:trajectories}
\end{figure}

\begin{figure}[t]
  \centering
  \includegraphics[width=\columnwidth]{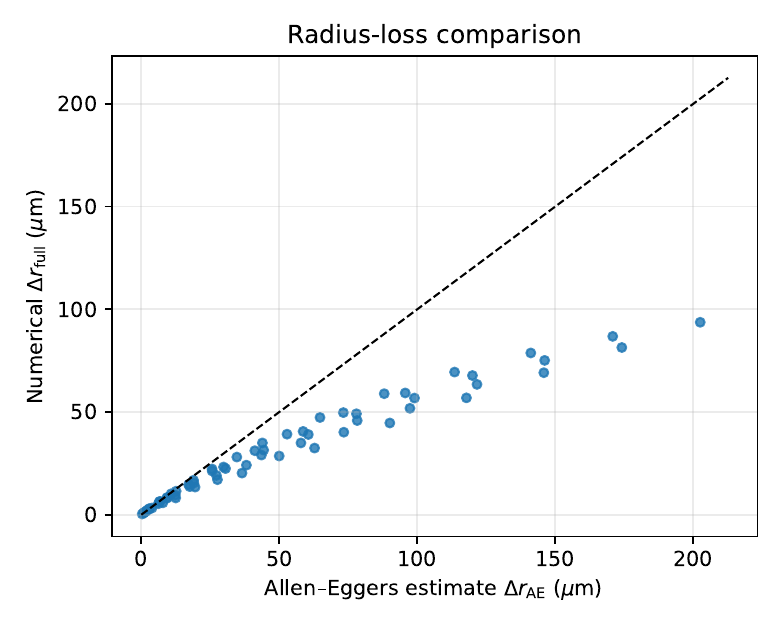}
  \caption{Semi-analytical radius loss from Eq.~\eqref{eq:dr_integral} compared with direct numerical integration for melting silicate trajectories. The diagonal gives exact agreement. The Allen--Eggers estimate reproduces the ranking of radius loss and overestimates the largest losses because the frozen-radius trajectory remains hotter than the evolving full trajectory.}
  \label{fig:radiusloss}
\end{figure}

\section{Sensitivity and fate maps}
\label{sec:sensitivity}

The analytical threshold formula immediately gives local elasticities. From Eq.~\eqref{eq:rcrit_formula},
\begin{align}
  E_{v_0}&=-3,
  & E_{\Lambda}&=-1,
  & E_{\Gamma}&=1,\\
  E_{\rhop}&=-1,
  & E_{\sin\gamma}&=-1,
  & E_{\chirad}&=1.
\end{align}
The elasticity with respect to $\Tm$ is
\begin{equation}
  E_{\Tm}=\frac{4\Tm^4}{\Tm^4-\Ta^4}.
\end{equation}
Figure~\ref{fig:sensitivity} compares these local elasticities, including the radiative-geometry factor, with first-order and total Sobol indices computed from the analytical threshold over finite parameter ranges \cite{saltelli2008}. The two measures answer different questions. Local elasticities describe infinitesimal changes near a baseline point. Sobol indices measure variance contributions across a prescribed input ensemble.

\begin{figure}[t]
  \centering
  \includegraphics[width=\columnwidth]{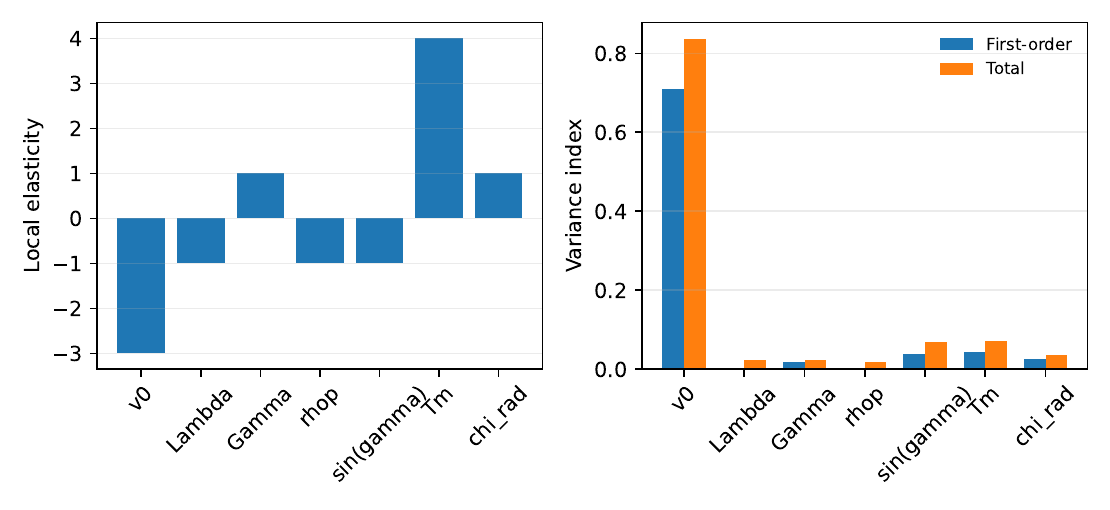}
  \caption{Left: local elasticities of the analytical threshold, including $\chirad$. Right: first-order and total Sobol indices for the same threshold over finite parameter ranges. Entry speed dominates the variance budget because it enters the threshold as $v_0^{-3}$.}
  \label{fig:sensitivity}
\end{figure}

Figures~\ref{fig:fate_silicate} and \ref{fig:fate_iron} show the fate maps for silicate and iron particles. The regimes are unmelted survival, melted survival, partial ablation, and complete ablation. The dashed curve is the Allen--Eggers threshold estimate. The full spherical radiative area shifts the threshold upward while preserving the ordering: low-speed trajectories survive more easily, and increasing speed drives particles toward ablation.

\begin{figure}[t]
  \centering
  \includegraphics[width=\columnwidth]{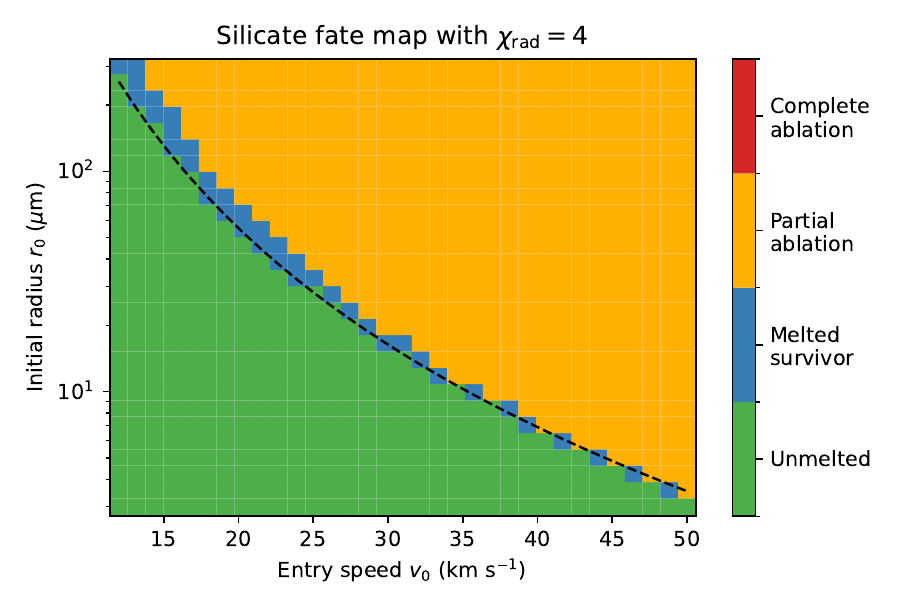}
  \caption{Silicate fate map with $\chirad=4$. The dashed curve is the Allen--Eggers threshold estimate.}
  \label{fig:fate_silicate}
\end{figure}

\begin{figure}[t]
  \centering
  \includegraphics[width=\columnwidth]{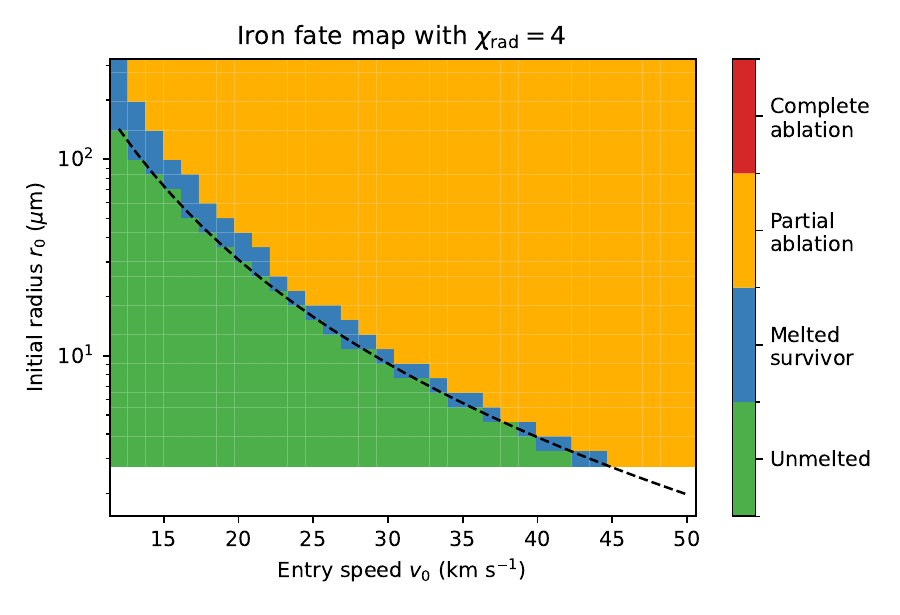}
  \caption{Iron fate map with $\chirad=4$. The larger density pushes the threshold to smaller radii, while the same speed ordering remains visible.}
  \label{fig:fate_iron}
\end{figure}

\section{Benchmark checks against classical entry behaviour}
\label{sec:benchmark}

The reduced model is designed to capture the leading thermal-survival partition in radius-speed space. It is therefore compared at the level of regime ordering rather than detailed chemical ablation. Classical micrometeoroid entry calculations show that survival is favoured at low entry speed, that increasing speed drives particles toward melting and ablation, and that detailed chemical models require additional physics such as sputtering, elemental evaporation, diffusion limits, ionisation, and fragmentation \cite{love1991,vondrak2008,carrillo2015,briani2013,campbellbrown2004}. The threshold model reproduces this qualitative structure. With the spherical radiation convention $\chirad=4$, the Allen--Eggers threshold retains the inverse-cubic speed dependence, while the normalisation shifts according to the emitting-area factor. The fate maps show low-speed survival, transition through melting and partial ablation, and complete ablation in the high-speed region. Table~\ref{tab:benchmark} summarises these checks.

\begin{table*}[t]
\centering
\caption{Benchmark checks against classical micrometeoroid-entry behaviour.}
\label{tab:benchmark}
\begin{tabular}{p{0.20\textwidth}p{0.34\textwidth}p{0.34\textwidth}}
\toprule
Check & Classical entry behaviour & Reduced-model behaviour \\
\midrule
Speed trend & Low entry speeds favour survival; increasing speed drives particles toward melting and ablation. & Fate maps show low-speed survival and transition to melting, partial ablation, and complete ablation as $v_0$ increases. \\
Threshold slope & Classical heating arguments predict a strong inverse dependence on entry speed. & The Allen--Eggers reduction gives $r_0^{\rm crit}\propto v_0^{-3}$. \\
Radiative geometry & A spherical isothermal particle radiates over its full surface. & The model uses $A_{\rm rad}=\chi_{\rm rad}\pi r^2$ with $\chi_{\rm rad}=4$. \\
Model scope & Detailed models include sputtering, elemental evaporation, diffusion limits, ionisation, fragmentation, and composition-dependent chemistry. & The present model isolates thermal threshold survival, radius loss, and survivor-only inverse limits. \\
\bottomrule
\end{tabular}
\end{table*}

\section{Survivor-only inverse problem}
\label{sec:inverse}

Let $x\in\R^{N_e}$ denote the pre-atmospheric entry distribution over a finite set of entry bins. Let $K_{\rm full}\in\R^{(N_s+1)\times N_e}$ be the transfer matrix that maps each entry bin to survivor bins plus one ablation row. The survivor-only observation model is
\begin{equation}
  y_{\rm surv}=K_{\rm surv}x,
\end{equation}
where $K_{\rm surv}\in\R^{N_s\times N_e}$ is obtained from $K_{\rm full}$ by removing the ablation row.

\begin{proposition}[Survivor-only non-identifiability]
\label{prop:inverse}
If column $i$ of $K_{\rm surv}$ is zero, then the entry-bin amplitude $x_i$ is not identifiable from survivor-only data.
\end{proposition}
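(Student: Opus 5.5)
The plan is to show directly that the survivor-only data cannot distinguish two entry distributions that differ only in the $i$th amplitude. I would first fix what identifiability means here: $x_i$ is identifiable from survivor-only data if the map $x\mapsto K_{\rm surv}x$ determines $x_i$ uniquely on the admissible set. That is, $K_{\rm surv}x=K_{\rm surv}x'$ must imply $x_i=x_i'$. Since entry distributions are nonnegative, the admissible set should be taken as the nonnegative orthant $\{x\in\R^{N_e}: x\geq0\}$ rather than all of $\R^{N_e}$. The counterexample must then respect this constraint.

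The construction uses the standard basis vector $e_i$. By hypothesis $K_{\rm surv}e_i=0$, since this product is exactly column $i$, so $e_i\in\ker K_{\rm surv}$. For any admissible $x$ and any $t\geq -x_i$, the vector $x'=x+te_i$ is still nonnegative. It also satisfies $K_{\rm surv}x'=K_{\rm surv}x+tK_{\rm surv}e_i=y_{\rm surv}$. Choosing any $t\neq0$ in this range, for instance $t=1$, gives two admissible entry distributions with the same survivor data and $x_i'\neq x_i$. Hence $x_i$ is not identifiable.

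It is worth adding that the full transfer matrix sees the difference. Since $K_{\rm surv}$ is $K_{\rm full}$ with the ablation row removed, column $i$ of $K_{\rm full}$ is supported entirely on that row. If this entry is nonzero, as it is when bin $i$ ablates completely with probability one, then $K_{\rm full}e_i\neq0$. The lost information is therefore exactly the ablated mass, which is why external information is needed to recover it.

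The only real obstacle is the modelling step: stating identifiability precisely and handling the nonnegativity constraint, since the algebra is immediate. If the paper also allows regularised or Bayesian reconstructions, I would remark that any estimator depending on the data only through $y_{\rm surv}$ returns the same value for $x$ and $x'$. Such an estimator is therefore wrong for at least one of them, so the conclusion does not depend on the particular reconstruction method.
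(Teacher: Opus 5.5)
Your proof is correct and uses the same argument as the paper: a zero $i$th column means $K_{\rm surv}e_i=0$, so $x$ and $x+\alpha e_i$ produce identical survivor data while differing in the $i$th component. Restricting to $t\ge -x_i$ keeps the perturbed distribution nonnegative, which makes the counterexample physically admissible; the paper omits this small refinement and lets $\alpha$ be any scalar.
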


\begin{proof}
If the $i$th column is zero, then $K_{\rm surv}e_i=0$. For any scalar $\alpha$, 
\begin{equation}
  K_{\rm surv}(x+\alpha e_i)=K_{\rm surv}x+\alpha K_{\rm surv}e_i=K_{\rm surv}x.
\end{equation}
Hence the data do not determine the component $x_i$.
\end{proof}

The deterministic reduced model defines an ideal zero-survival set
\begin{equation}
  \Abl=\{i: p_{\rm surv}(i)=0\}.
\end{equation}
For a sampled transfer matrix, an observed zero column does not by itself prove that the true survival probability is exactly zero. If zero survivors are observed in $n$ independent trials, then a one-sided $(1-\alpha)$ upper confidence bound is
\begin{equation}
  p_{\rm surv}\le 1-\alpha^{1/n}.
\end{equation}
This distinction matters when the matrix is estimated by Monte Carlo rather than by a deterministic sweep.

Figure~\ref{fig:inverse} shows two diagnostics. The left panel gives the survival probability over a coarse input grid. The right panel shows the singular spectrum of the corresponding survivor matrix. Zero-survival regions create structural null directions. The decaying nonzero singular values quantify additional inversion difficulty on the surviving subspace.

\begin{figure*}[t]
  \centering
  \includegraphics[width=0.88\textwidth]{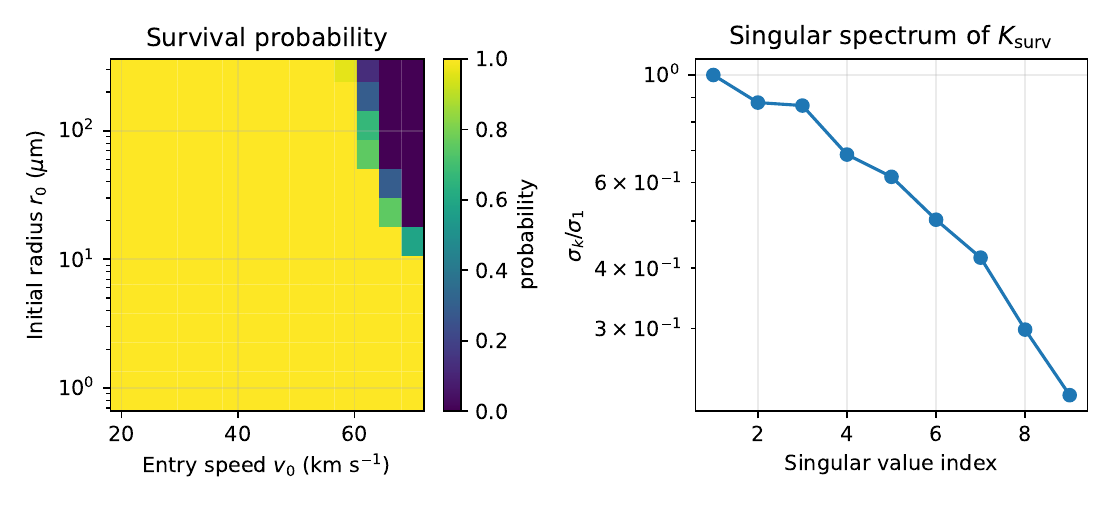}
  \caption{Inverse diagnostics for the survivor-only transfer matrix. Left: survival probability across the entry grid. Right: normalised singular values of $K_{\rm surv}$. Zero-survival regions generate structural null directions, while the remaining singular spectrum measures conditioning on the survivor subspace.}
  \label{fig:inverse}
\end{figure*}

\section{Numerical implementation}
\label{sec:numerics}

All plotted trajectories were generated from the reduced model with $\chirad=4$. The numerical sweeps step downward in altitude from $z_0=120~\mathrm{km}$ using the reduced equations with a fixed altitude increment and event checks for ground arrival, complete ablation, or loss of positivity. When stepping in altitude, time derivatives are converted by
\begin{equation}
  \frac{\dd}{\dd z}=\frac{\dd/\dd t}{-v\sin\gamma}.
\end{equation}
The threshold curve in Fig.~\ref{fig:threshold} is found by a binary search in the initial radius for each speed. The fate maps classify each trajectory according to whether it remains unmelted, survives after melting with little loss, survives after partial ablation, or fully ablates. The inverse diagnostics are built from a coarse entry grid with small random perturbations inside each cell.

\section{Model scope}
\label{sec:scope}

The model is a threshold model. It is designed to expose the geometry of the survival boundary, not to replace full chemical ablation codes. The main assumptions are: spherical particles; a single internal temperature; exponential atmosphere; constant transport coefficients; free molecular drag; no fragmentation; and a melting threshold with surplus-heat ablation. These assumptions make the threshold, scaling law, radius-loss identity, and survivor-only null space explicit. They also limit the model to problems where leading-order survival structure is the main target. Detailed chemistry, differential elemental loss, sputtering, phase transitions beyond the melting threshold, and non-spherical effects remain the domain of more elaborate ablation models such as CAMOD and CABMOD \cite{vondrak2008,carrillo2015,briani2013}.

\section{Conclusion}
\label{sec:conclusion}

This paper develops a reduced threshold model for micrometeoroid atmospheric entry. Free molecular drag and projected-area heating are combined with full-sphere radiative cooling and a surplus-heat ablation rule on the melting surface. The threshold is naturally written as a Filippov/complementarity problem. Within the Allen--Eggers reduction, the critical radius satisfies an exact inverse-cubic speed law with a geometry-dependent normalisation. Along the prescribed Allen--Eggers trajectory, the radius-loss formula is exact. In the full reduced model, that formula remains accurate when radius evolution and gravity are perturbative on the main heating interval. The survivor-only inverse problem has a clear structural null space: zero-survival entry bins cannot be reconstructed without external information. The geometry factor $\chirad$ changes the threshold normalisation, but it does not change the threshold mechanism, the inverse-cubic trend, or the survivor-only inverse limit. These features make the model useful as an analytical threshold description of atmospheric entry survival.

\bibliographystyle{unsrt}
\bibliography{refs}

\end{document}